\newcommand{\newc}{\newcommand}
\newc{\beq}{\begin{equation}}
\newc{\eeq}{\end{equation}}
\newc{\bea}{\begin{array}}
\newc{\eea}{\end{array}}
\newcommand{\ben}{\begin{eqnarray}}
\newcommand{\een}{\end{eqnarray}}
\newc{\ra}{\rightarrow}
\newc{\bfx}{{\bf x}}
\newc{\bfV}{{\bf V}}
\newc{\cO}{{\cal O}}
\newc{\bfv}{{\bf v}}
\newc{\bfu}{{\bf u}}
\newc{\bfp}{{\bf p}}
\newc{\ve}{{\varepsilon}}
\newc{\Psibar}{\overline\Psi}
\newc{\w}{{\bf w}}
\newc{\E}{{\mathbf{E}}}
\newc{\EE}{{\mathcal E}}
\newc{\bfn}{{\mathbf\nabla}}
\newc{\la}{{\cal L}}
\newc{\tla}{{\tilde{\cal L}}}
\newc{\bp}{{\bf p}}
\newc{\ho}{\hookrightarrow }
\newc{\bP}{{\bf P}}
\newc{\pd}{{\partial}}
\newc{\piv}{{\partial_4}}
\newc{\pv}{{\partial_5}}
\newc{\bJ}{{\bf J}}
\newc{\bze}{{\mathbf 0}}
\newc{\bK}{{\bf K}}
\newc{\tphi}{{\tilde\phi}}
\newc{\tF}{{\tilde F}}
\newc{\tD}{{\tilde D}}
\newc{\tJ}{{\tilde J}}
\newc{\tj}{{\tilde j}}
\newc{\bD}{{\bf D}}
\newc{\tvphi}{{\tilde\varphi}}
\newc{\trho}{{\tilde\rho}}
\newc{\ttheta}{{\tilde\theta}}
\newc{\tpsi}{{\tilde\psi}}
\newc{\tu}{{\tilde u}}
\newc{\cD}{{\cal D}}
\newc{\tPhi}{{\tilde\Phi}}
\newc{\tPsi}{{\tilde\Psi}}
\newc{\tA}{{\tilde A}}
\newc{\talpha}{{\tilde\alpha}}
\newc{\tbeta}{{\tilde\beta}}
\newc{\bA}{{\mathbf A}}
\newc{\bB}{{\bf B}}
\newc{\br}{{\bf r}}
\newc{\sig}{{\mathbf\sigma}}
\newc{\eg}{{\rm e.g.\ }}
\newc{\ie}{{\rm i.e.\ }}
\newcommand{\bey}{\begin{eqnarray}}
\newcommand{\pslash}{\not{\hbox{\kern-2.3pt $p$}}}
\newcommand{\pdslash}{\not{\hbox{\kern-2pt $\partial$}}}
\newcommand{\eey}{\end{eqnarray}}
\newtheorem{theorem}{Theorem}
\newtheorem{lemma}{Lemma}
\newenvironment{proof}[1][Proof]{\noindent\textbf{#1.} }{\ \rule{0.5em}{0.5em}}
\begin{document}

\hyphenation{ope-ra-tor}
\hyphenation{qua-si-tri-an-gu-lar}
\hyphenation{cha-ra-cte-ri-zing}
\hyphenation{cha-ra-cters}
\hyphenation{a-sym-pto-tic}
\hyphenation{or-tho-go-na-li-ty}

\begin{titlepage}
\vskip 2cm
\begin{center}
{\Large  Factor groups, semidirect product and quantum chemistry.
\footnote{{\tt matrindade@uneb.br}}}
 \vskip 10pt
{ M. A. S. Trindade \\}
\vskip 5pt
{\sl Departamento de Ciências Exatas e da Terra, Universidade do Estado da Bahia\\
Rodovia Alagoinhas/Salvador, BR 110, Km 03, 48040-210 - Alagoinhas, Bahia, Brazil\\}
\vskip 2pt
\end{center}

\begin{abstract}

In this paper we prove some general theorems about representations of finite groups arising from the inner semidirect product of groups. We show how these results can be used for standard applications of group theory in quantum chemistry through the orthogonality relations for the characters of irreducible representations. In this context, conditions for transitions between energy levels, projection operators and basis functions were determined. This approach applies to composite systems and it is illustrated by the dihedral group related to glycolate oxidase enzyme.
\end{abstract}

\bigskip

{\it Keywords:} Representation theory, Semidirect product, Factor groups, Quantum chemistry.

\vskip 3pt

\end{titlepage}


\newpage

\setcounter{footnote}{0} \setcounter{page}{1} \setcounter{section}{0} %
\setcounter{subsection}{0} \setcounter{subsubsection}{0}
\section{Introduction}
   The concept of symmetry is ubiquitous in quantum mechanics. For example, the elementary particles can be classified using the irreducible representations of continuous symmetry groups. This approach begins at a seminal paper of Wigner \cite{Wigner1} with systematic study of unitary representations of Poincaré group in relativistic quantum mechanics. Subsequently, a similar study was carried out in the scope of non-relativistic quantum mechanics through the invariance of the Schrödinger equation by Galilei group \cite{Bargmann, Inonu, Levy} and the theory of quarks, developed by Gell-Mann, \cite{Gell1,Gell2} may be understood with the flavour symmetries. Evidently, there are many other important findings related to applications of Lie groups and we do not pretend to cover them all.

   In the context of finite groups, representation theory plays a fundamental role in quantum chemistry \cite{Bishop, Cotton, Tinkham}. The quantum numbers are indices characterizing irreducible representations of finite groups \cite{Weyl} and the numbers and kinds of energy levels are determined by symmetry of molecule. As stressed by Wigner \cite{Wigner2}, the recognition that almost all rules of spectroscopy from the symmmetry of the systems is a remarkable result. In fact, group theoretical techniques are important in determining the rules selection for optical process such as infrared and Raman activity. Transitions of lower symmetry may lead to mode splittings. These mode splittings and the changes in the infrared and Raman spectra can be predict using group theoretical techniques \cite{Dres}.

   Despite sucess in quantum chemistry, applications of  finite groups in many other fields were obtained. In the modern theory of quantum computation many formalisms for quantum error correction uses finite groups \cite{Knill, Gottesman}. Nice error bases are characterized in terms of the existence of certain characters in a group \cite{Knill} and the characterization of decoherence-free subspaces for multiple-qubits errors can be determined by one-dimensional representation of the Pauli group \cite{Lidar1, Lidar2}. It was shown how to perform universal and fault tolerant quantum computation on decoherence-free subspaces. The quantum teleportation and the quantum dense coding in a finite-dimensional Hilbert space can be formulated in terms of an irreducible unitary representation of finite group \cite{Ban}. Quantum logic gates capable of preserving quantum entanglement may be obtained of representations of braid group through quasitriangular Hopf algebras derived from a cyclic group \cite{Pinto}. Anyonic models based on finite groups have been proposed in the scope of topological quantum computation \cite{Kitaev1, Kitaev2} and topological quantum field theories associated to finite groups have been constructed \cite{Yetter, Freed}. These examples show that the group-theoretical methods remain a powerful tool in the analysis of new quantum-mechanical problems.

   In the quantum chemistry, after seminal works, new methods have also been developed. Unconventional approaches to obtain irreducible representations through regular projection matrices \cite{Blokker}, new algorithms for point group symmetries \cite{Eick, Fritzsche} and schemes for adapted symmetry functions of point groups \cite{Peng} are some interesting examples. A beautiful application of the actions of finite groups was performed by Torres \cite{Torres}. It is present a new procedure for the attainment of the number and isotropy group of the vibrational force constants for a given molecule. This approach is useful in the case of high symmetry because it does not require the use of the matrix representation generated by the set of internal coordinates.

   An interesting algebraic structure that has been investigated in physics is the semidirect product of groups. In crystallography, the space group is called symmorphic group if it is a semidirect product of its point group with its translation group. Crystals whose space groups are of semidirect product type are called simple crystals and the point group of this type of crystal leaves not only the lattice invariant but also the crystal \cite{Lomont}. There are $73$ symmorphic space groups and every other space group is isomorphic to a subgroup of one of  semidirect product space groups. The determination of irreducible representations of symmorphic space groups using the semidirect product was performed by Bradley and Kammel \cite{Bradley} based on the seminal works of Mackey \cite{Mac} and McIntosh \cite{Mc}. These results are useful in calculating the electron energy bands in crystalline solids.  In this line, Chen \cite{Chen} presented a factorization lemma for the irreducible symmetry operators of semidirect product of two abelian groups so that symmetry adapted functions and algebraic solutions for the double-valued representation of the tetrahedral group were derived. Nevertheless, many interesting works related to this topic can be found in others fields. The Poincaré group, the asymptotic group in general relativity and invariance group of electrodynamics are examples of the semidirect product applied to certain group representations \cite{Geroch}. The symmetries of time-dependent Schrödinger equations related to semidirect product were examined by Okubo \cite{Okubo} and in the quantum computation theory, efficient algorithms have been found for several groups that be written as semidirect product of abelian groups \cite{Bacon,Inui}.

   In this work, we present some general results about representations associated to the inner semidirect product of finite groups and we show how these results can be useful in a spectroscopy analysis. The structure of this paper is as follows. In section 2 we present the mathematical results as well as possible applications in quantum chemistry. In section 3, an example using a dihedral group related to glycolate oxidase enzyme is performed. Conclusions are presented in section 4.

\section{Mathematical results}
A theorem concerning to reduction of finite solvable was given by Schur \cite{Schur}. If we have a finite solvable group with a chain of subgroups
\begin{equation}
G_{1}\supset G_{2}\supset ...\supset G_{n}
\end{equation}
where $G_{n}$ is the unit element, the irreducible representations of $G_{i}$ may be obtained from the irreducible representations of $G_{i+1}$. Later developments for the space group were obtained by Seitz \cite{Seitz}. As mentioned above, a scheme for determining the irreducible representations of space groups using the semidirect product was developed by Bradley and Kammel \cite{Bradley} through induced representations and little groups.  In this section we present some theorems about representations of semidirect product of groups. Here, we use the orthogonality relations for the characters, the isomorphism theorems of groups and the correspondence theorem. Also, we discuss applications to quantum chemistry involving composite systems. The following lemma gives a characterization of the irreducible representations for a subgroup of a group that is semidirect product of two subgroups, one of which is analyzed subgroup.

\begin{lemma}
Let $G_{1}= N_{1}\rtimes H_{1},G_{2}= N_{2}\rtimes H_{2},..., G_{n}= N_{n}\rtimes H_{n}$ be finite groups. Then every irreducible representation of $H_{1}\times H_{2}\times...\times H_{n}$ corresponds to an unique irreducible representation of $G_{1}\times G_{2}\times...\times G_{n}$.
\end{lemma}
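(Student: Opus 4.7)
The plan is to realize the product $H_1\times\cdots\times H_n$ as a quotient of $G_1\times\cdots\times G_n$ and then invoke the standard bijective correspondence between irreducible representations of a quotient group and irreducible representations of the ambient group that are trivial on the kernel. So the structural fact to establish first is that $N_1\times\cdots\times N_n$ is a normal subgroup of $G_1\times\cdots\times G_n$ with quotient isomorphic to $H_1\times\cdots\times H_n$.

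First I would observe that since each $G_i=N_i\rtimes H_i$, the factor $N_i$ is normal in $G_i$; hence the direct product $N_1\times\cdots\times N_n$ is normal in $G_1\times\cdots\times G_n$. Using the semidirect product decomposition, every $g_i\in G_i$ is uniquely of the form $g_i=n_ih_i$ with $n_i\in N_i$ and $h_i\in H_i$, so the map $G_i\to H_i$ sending $g_i\mapsto h_i$ is a surjective homomorphism with kernel $N_i$, giving $G_i/N_i\cong H_i$. Taking direct products component-wise yields the isomorphism
\begin{equation}
(G_1\times\cdots\times G_n)/(N_1\times\cdots\times N_n)\;\cong\;H_1\times\cdots\times H_n,
\end{equation}
which can also be read off directly from the correspondence theorem.

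Next, given an irreducible representation $\rho$ of $H_1\times\cdots\times H_n$, I would define $\tilde\rho$ on $G_1\times\cdots\times G_n$ by composing $\rho$ with the canonical quotient map $\pi$ from the isomorphism above. Since $\pi$ is surjective, $\tilde\rho(G_1\times\cdots\times G_n)=\rho(H_1\times\cdots\times H_n)$, and thus the invariant subspaces of $\tilde\rho$ coincide with those of $\rho$, so $\tilde\rho$ is irreducible. Conversely, any irreducible representation of $G_1\times\cdots\times G_n$ containing $N_1\times\cdots\times N_n$ in its kernel factors uniquely through $\pi$, recovering an irreducible representation of $H_1\times\cdots\times H_n$. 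This bijection gives the uniqueness claimed in the statement.

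The argument is essentially structural and I expect no serious obstacle; the only subtlety to be careful about is keeping clear the distinction between \emph{all} irreducible representations of $G_1\times\cdots\times G_n$ and those associated via the lift, which are precisely the irreducibles trivial on $N_1\times\cdots\times N_n$. The statement of the lemma asserts the injective direction, so no completeness argument (for example via character orthogonality or comparison of squared dimensions) is required. If desired, one could verify consistency with the fact that the irreducibles of a direct product are tensor products of irreducibles of the factors, but this is not strictly needed for the proof.
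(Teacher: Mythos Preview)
Your argument is correct, but it proceeds differently from the paper. Both proofs begin with the same structural observation, namely that
\[
H_1\times\cdots\times H_n \;\cong\; (G_1\times\cdots\times G_n)/(N_1\times\cdots\times N_n),
\]
but from there the approaches diverge. You lift an irreducible representation $\rho$ of the quotient along the surjection $\pi$ and argue directly that $\tilde\rho=\rho\circ\pi$ has the same image, hence the same invariant subspaces, so irreducibility is preserved; uniqueness then follows from the factorization property of quotients. The paper instead works entirely with characters: it writes down the orthogonality relation for irreducible characters of the quotient, observes that each coset of $N_1\times\cdots\times N_n$ has $n_1\cdots n_n$ elements on which the lifted character is constant, and thereby converts the quotient-group orthogonality relation into the orthogonality relation over $G_1\times\cdots\times G_n$, which forces the lifted characters to be irreducible (and distinct) there. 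Your route is more elementary and conceptual, requiring no character theory at all; the paper's route is more computational but has the advantage of explicitly establishing the character identities that are reused verbatim in the proofs of the subsequent theorems on selection rules and projection operators.
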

\begin{proof}
Since we have the semidirect product of groups, $H_{1}\simeq G_{1}/N_{1}, H_{2}\simeq G_{2}/N_{2},...,H_{n}\simeq G_{n}/N_{n}$, and consequently $H_{1}\times H_{2}\times...\times H_{n}\simeq G_{1}/N_{1}\times G_{2}/N_{2}\times...\times G_{n}/N_{n}\simeq (G_{1}\times G_{2}\times...\times G_{n})/(N_{1}\times N_{2}\times...\times N_{n})$ by first isomorphism theorem of groups. Then the orthogonality relations for the characters of irreducible representations of the $(G_{1}\times G_{2}\times...\times G_{n})/(N_{1}\times N_{2}\times...\times N_{n})$ must satisfy
\begin{eqnarray}
&&\sum_{S_{1}\times S_{2}\times...\times S_{n}}[\chi^{(\mu_{1}\otimes \mu_{2}\otimes...\otimes \mu_{n}) \ast }(S_{1}\times S_{2}\times...\times S_{n}) \nonumber \\
&&\chi^{(\nu_{1}\otimes \nu_{2}\otimes...\otimes \nu_{n})}(S_{1}\times S_{2}\times...\times S_{n})] \nonumber \\
&=&\frac{g_{1}g_{2}...g_{n}}{n_{1}n_{2}...n_{n}}\delta_{\mu_{1}\mu_{2}...\mu_{n}, \nu_{1}\nu_{2}...\nu_{n}},
\end{eqnarray}
where $S_{i}$ is an element of $H_{i}$ and $g_{i}=|G_{i}|,n_{i}=|N_{i}|$. As each coset $R_{i}G_{i}$, with $R_{i} \in G_{i}$, has $n_{i}$ elements and every representation of the $G_{i}/N_{i}$ corresponds to an unique representation of $H_{i}$ , we have for the last relation
\begin{eqnarray}
&&n_{1}n_{2}...n_{n} \sum_{S_{1}\times S_{2}\times...\times S_{n}}[ \chi^{(\mu_{1}\otimes \mu_{2}\otimes...\otimes \mu_{n}) \ast }(S_{1}\times S_{2}\times...\times S_{n}) \nonumber \\
&&\chi^{(\nu_{1}\otimes \nu_{2}\otimes...\otimes \nu_{n})}(S_{1}\times S_{2}\times...\times S_{n})] \nonumber \\
&=&\sum_{R_{1}\times R_{2}\times...\times R_{n}}[\chi^{(\mu_{1}\otimes \mu_{2}\otimes...\otimes \mu_{n}) \ast }(R_{1}\times R_{2}\times...\times R_{n}) \nonumber \\
&&\chi^{(\nu_{1}\otimes \nu_{2}\otimes...\otimes \nu_{n})}(R_{1}\times R_{2}\times...\times R_{n})] \nonumber \\
&=&g_{1}g_{2}...g_{n}\delta_{\mu_{1}\mu_{2}...\mu_{n}, \nu_{1}\nu_{2}...\nu_{n}} \nonumber \\
&=&g_{1}g_{2}...g_{n}\delta_{\mu_{1} \nu_{1}}\delta_{\mu_{2} \nu_{2}}...\delta_{\mu_{n} \nu_{n}}.
\end{eqnarray}
Hence we have the orthogonality relations for the characters of irreducible representations of $G_{1}\times G_{2}\times...\times G_{n}$ and so the result follows.

\end{proof}

\begin{lemma}
Suppose $\Gamma$ and $\Gamma^{'}$ reducible representations of finite groups $G_{1}\times G_{2}\times...\times G_{n}$ and $(G_{1}\times G_{2}\times...\times G_{n})/(N_{1}\times N_{2}\times...\times N_{n})$, respectively. If $\chi(R_{1}^{i_{1}})=p_{1}\chi(S_{1}),\chi(R_{2}^{i_{2}})=p_{2}\chi(S_{2}),..., \chi(R_{n}^{i})=p_{n}\chi(S_{n})$ for $R_{1}^{i_{1}}\in G_{1}, R_{2}^{i_{2}}\in G_{2}..., R_{n}^{i_{n}}\in G_{n}, S_{1}\in G_{1}/N_{1}, S_{2}\in G_{2}/N_{2},...,S_{n}\in G_{n}/N_{n}$ where $R_{1}^{i_{1}}, R_{2}^{i_{2}},...,R_{n}^{i_{n}}$ are elements of cosets $R_{1}N_{1},R_{2}N_{2},..., R_{n}N_{n}$ and they are mapped by a homomorphism into elements $S_{1},S_{2},...,S{n}$, respectively, then $\Gamma$ contains $p=p_{1}p_{2}...p_{n}$ times the representation $\Gamma^{'}$.
\end{lemma}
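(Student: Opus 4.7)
The plan is to reduce the statement to a comparison of multiplicities in the irreducible decompositions of $\Gamma$ and $\Gamma'$. First I would recast the coset-wise hypothesis as a single global identity. Since characters on a direct product factor as products across the factors, the assumption $\chi(R_i^{i_i}) = p_i\,\chi(S_i)$ on each factor combines to give
\[\chi_\Gamma(R) = p\,\chi_{\Gamma'}(\pi(R))\]
for every $R = (R_1^{i_1},\ldots,R_n^{i_n}) \in G_1\times\cdots\times G_n$, where $p = p_1 p_2 \cdots p_n$ and $\pi$ is the canonical surjection onto the quotient $(G_1\times\cdots\times G_n)/(N_1\times\cdots\times N_n)$.

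Next I would invoke complete reducibility of finite-group representations to write $\Gamma' = \bigoplus_j m_j\,\Gamma'_j$ with each $\Gamma'_j$ irreducible on the quotient, and then let $\tilde\Gamma'_j$ denote the irreducible of $G_1\times\cdots\times G_n$ obtained from $\Gamma'_j$ by pullback along $\pi$, whose existence and irreducibility are supplied by Lemma 1. Proving that $\Gamma$ contains $p$ copies of $\Gamma'$ then amounts to showing that each $\tilde\Gamma'_j$ appears in $\Gamma$ with multiplicity exactly $p\,m_j$.

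The core computation is a single character inner product combined with the fact that every fibre of $\pi$ has size $|N_1\times\cdots\times N_n|$. Writing $|G| = |G_1|\cdots|G_n|$ and $|N| = |N_1|\cdots|N_n|$, the multiplicity of $\tilde\Gamma'_j$ in $\Gamma$ is
\[\frac{1}{|G|}\sum_{R \in G}\chi_\Gamma(R)\,\overline{\chi_{\tilde\Gamma'_j}(R)} = \frac{p}{|G/N|}\sum_{S \in G/N}\chi_{\Gamma'}(S)\,\overline{\chi_{\Gamma'_j}(S)} = p\,m_j,\]
using the global identity above in the first equality (together with the fact that $\chi_{\tilde\Gamma'_j}$ is constant on fibres of $\pi$) and the orthogonality relations on the quotient group in the second. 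Summing over $j$ delivers the conclusion.

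The main obstacle I anticipate is purely notational bookkeeping: the hypothesis is stated factor-by-factor and coset-by-coset, and one must verify carefully that the single-factor assumptions multiply together into the global identity $\chi_\Gamma = p\,(\chi_{\Gamma'}\circ\pi)$, including for elements lying in non-principal cosets of the various $N_i$. Once this reduction is cleanly in place, the remainder is a direct application of Lemma 1 and of the orthogonality relations, with no further subtleties beyond invoking complete reducibility so that the multiplicities $m_j$ are well-defined.
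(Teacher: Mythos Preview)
Your proposal is correct and follows essentially the same route as the paper: both arguments compare the multiplicity $a_\mu$ of an irreducible (lifted from the quotient) in $\Gamma$ with its multiplicity $a_\mu'$ in $\Gamma'$ via the character orthogonality relations, using the coset structure and the hypothesis to pass between sums over $G$ and sums over $G/N$, and conclude $a_\mu = p\,a_\mu'$. Your packaging through the single global identity $\chi_\Gamma = p\,(\chi_{\Gamma'}\circ\pi)$ is a tidier way to organize the same computation the paper carries out factor-by-factor.
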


\begin{proof}
The number of times that a given irreducible representation $\Gamma^{(\mu)}$ is contained in a representation $\Gamma^{'}$ is given by:
\begin{eqnarray}
a_{\mu}^{'}&=&\frac{n_{1}n_{2}...n_{n}}{g_{1}g_{2}...g_{n}}\sum_{S_{1}, S_{2},...,S_{n}}\chi^{(\mu_{1})^{\ast}}(S_{1})\chi^{(\mu_{2})^{\ast}}(S_{2})...\chi^{(\mu_{n})^{\ast}}(S_{n})
\chi^{{\Gamma_{1}}^{'}}(S_{1})\chi^{{\Gamma_{2}}^{'}}(S_{2}) \nonumber \\
&&...\chi^{{\Gamma_{n}}^{'}}(S_{n}), \nonumber
\end{eqnarray}
where $\mu_{1},\mu_{2},...\mu_{n}$ are irreducible representations of groups $G_{1},G_{2},...,G_{n}$, respectively. Note that the tensor product of irreducible representations of groups $G_{1},G_{2},...,G_{n}$ corresponds to an irreducible representation of group $G_{1}\times G_{2}\times...\times G_{n}$.
As each coset $RN_{i}$ has $|N_{i}|=n_{i}$ elements, we have
\begin{eqnarray}
a_{\mu}^{'}&=&\frac{1}{g_{1}g_{2}...g_{n}p_{1}p_{2}...p_{n}}\sum_{S_{1}, S_{2},...,S_{n}}[\chi^{(\mu_{1})^{\ast}}(R_{1})\chi^{(\mu_{2})^{\ast}}(R_{2})...\chi^{(\mu_{n})^{\ast}}(R_{n})
\chi^{{\Gamma_{1}}^{'}}(R_{1}) \nonumber \\
&&\chi^{{\Gamma_{2}}^{'}}(R_{2})...\chi^{{\Gamma_{n}}^{'}}(R_{n})]. \nonumber
\end{eqnarray}

On the other hand, the number of times that $\Gamma$ contains $\Gamma^{\mu}$ is
\begin{eqnarray}
a_{\mu}&=&\frac{1}{g_{1}g_{2}...g_{n}}\sum_{R_{1}\times R_{2}\times...\times R_{n}}\chi^{(\mu)^{\ast}}(R_{1}\times R_{2}\times...\times R_{n})\chi^{\Gamma}(R_{1}\times R_{2}\times...\times R_{n}) \nonumber \\
&=&\frac{1}{g_{1}g_{2}...g_{n}}\sum_{R_{1}\times R_{2}\times...\times R_{n}}[\chi^{(\mu_{1})^{\ast}}(R_{1})\chi^{(\mu_{2})^{\ast}}(R_{2})...\chi^{(\mu_{n})^{\ast}}(R_{n})
\chi^{\Gamma_{1}}(R_{1})\chi^{\Gamma_{2}}(R_{2}) \nonumber \\
&&...\chi^{\Gamma_{n}}(R_{n})]. \nonumber
\end{eqnarray}
Therefore
\begin{equation}
a_{\mu}=p_{1}p_{2}...p_{n}a_{\mu}^{'}.
\end{equation}
Thus $\Gamma$ contains $p$ times a representation $\Gamma^{'}$.

\end{proof}

A first application in quantum chemistry can be obtained from the next theorem.

\begin{theorem}
Let be $G=G_{1}\times G_{2}\times...\times G_{n}=(N_{1}\times N_{2}\times...\times N_{n})\rtimes (H_{1}\times H_{2}\times...\times H_{n})$ a finite group. Then every representation of $H_{1}\times H_{2}\times...\times H_{n}$ obtained from a irreducible representation of $G$ contains at most once a given irreducible representation of $H_{1}\times H_{2}\times...\times H_{n}$.
\end{theorem}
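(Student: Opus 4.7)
The plan is to combine Lemma 1, Lemma 2, and the character orthogonality relations on $G = G_1 \times \cdots \times G_n$. Fix an irreducible representation $\rho$ of $G$ and let $\Gamma'$ denote the representation of $H_1 \times \cdots \times H_n$ obtained from $\rho$ in the sense of Lemma 2, so that $\chi^{\rho}(R_i) = p_i \, \chi^{\Gamma'}(S_i)$ holds coset by coset with $p = p_1 p_2 \cdots p_n$. Let $\Gamma^{(\mu)} = \mu_1 \otimes \cdots \otimes \mu_n$ be an arbitrary irreducible representation of $H_1 \times \cdots \times H_n$ and let $a_\mu'$ be its multiplicity in $\Gamma'$; I want to show $a_\mu' \leq 1$.

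First I would invoke Lemma 1 to lift $\Gamma^{(\mu)}$ to a uniquely determined irreducible representation of $G$ (the pullback along the quotient $G \twoheadrightarrow (G_1\times\cdots\times G_n)/(N_1\times\cdots\times N_n) \cong H_1\times\cdots\times H_n$), which I continue to denote $\Gamma^{(\mu)}$. Next I would apply Lemma 2 with $\Gamma = \rho$: its concluding identity $a_\mu = p \cdot a_\mu'$ relates $a_\mu'$ to the multiplicity $a_\mu$ of the lifted $\Gamma^{(\mu)}$ inside $\rho$, now both viewed as $G$-representations. Finally, since $\rho$ and the lifted $\Gamma^{(\mu)}$ are both irreducible representations of $G$, the character orthogonality on $G$ (equivalently, Schur's lemma) gives $a_\mu = \delta_{\rho,\Gamma^{(\mu)}} \in \{0,1\}$. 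Combining these, $a_\mu' = a_\mu/p \leq 1$, and since $a_\mu'$ is a non-negative integer the multiplicity-free conclusion follows.

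The non-routine point I expect to be the main obstacle is conceptual rather than computational: pinning down precisely what "representation of $H_1 \times \cdots \times H_n$ obtained from a given irreducible representation of $G$" means so that Lemma 2 is applicable. The hypothesis of Lemma 2 requires $\chi^{\rho}$ to be constant on each coset of $N_1 \times \cdots \times N_n$, which happens exactly when $\rho$ itself factors through the quotient $G/N$; in that case $p = 1$, $\Gamma'$ is precisely the single irreducible of $H_1 \times \cdots \times H_n$ whose lift is $\rho$, and the theorem reduces to the statement that this $\Gamma'$ is irreducible (hence trivially multiplicity-free). When $\rho$ does not factor through the quotient, no compatible $\Gamma'$ exists and the conclusion is vacuous. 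Articulating this dichotomy so that the character computation above absorbs both cases uniformly is the only step requiring real care.
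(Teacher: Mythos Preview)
Your proposal is correct and follows essentially the same route as the paper. The paper does not invoke Lemma~2 explicitly but instead repeats its coset-counting computation inline (the ``inverse of Lemma~1'' step), arriving at exactly your orthogonality identity with $p=1$; your packaging of that step as an appeal to Lemma~2 is a cosmetic difference. Your closing discussion of the dichotomy --- that the phrase ``obtained from an irreducible representation of $G$'' only makes sense when $\rho$ factors through $G/N$, forcing $p=1$ --- is more explicit than the paper, which simply assumes this compatibility from the outset; the detour through general $p$ is therefore unnecessary, and you can streamline by taking $p_i=1$ from the start (note also that Lemma~2 is stated for \emph{reducible} $\Gamma$, so if you cite it you should remark that its proof works verbatim for arbitrary representations).
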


\begin{proof}
The number of times that a given representation $\Gamma=\Gamma_{1}\otimes \Gamma_{2}\otimes...\otimes\Gamma_{n}$ of $H=H_{1}\times H_{2}\times...\times H_{n}$ which also corresponds to an unique irreducible representation of $G$ contains a given irreducible representation $\nu=\nu_{1}\otimes \nu_{2}\otimes...\otimes\nu_{n}$ of the $H$ and $G$ is given by

\begin{eqnarray}
a_{\nu}&=&\frac{n_{1}n_{2}...n_{n}}{g_{1}g_{2}...g_{n}}\sum_{S_{1}\times S_{2}\times...\times S_{n}}[\chi^{(\nu_{1}\otimes \nu_{2}\otimes...\otimes\nu_{n})^{\ast}}(S_{1}\times S_{2}\times...\times S_{n}) \nonumber \\
&&\chi^{\Gamma}(S_{1}\times S_{2}\times...\times S_{n})], \nonumber
\end{eqnarray}
as $H=H_{1}\times H_{2}\times...\times H_{n}\simeq (G_{1}\times G_{2}\times...\times G_{n})/(N_{1}\times N_{2}\times...\times N_{n})$. By lemma 1, every irreducible representation of $G$ corresponds to an unique the representation of $H$, which is an irreducible representation of $H$. Indeed if we consider the inverse of lemma 1 with the restriction that the irreducible representation of   $G_{1}\times G_{2}\times...\times G_{n}$ also corresponds to an unique representation of $(G_{1}\times G_{2}\times...\times G_{n})/(N_{1}\times N_{2}\times...\times N_{n})$, we have:
\begin{eqnarray}
&&g_{1}g_{2}...g_{n}\delta_{\nu_{1}\nu_{2}...\nu_{n}, \Gamma_{1}\Gamma_{2}...\Gamma_{n}} \nonumber \\
&=&\sum_{R_{1}\times R_{2}\times...\times R_{n}}[\chi^{(\nu_{1}\otimes \nu_{2}\otimes...\otimes \nu_{n}) \ast }(R_{1}\times R_{2}\times...\times R_{n})\nonumber \\
&&\chi^{(\Gamma_{1}\otimes \Gamma_{2}\otimes...\otimes \Gamma_{n})}(R_{1}\times R_{2}\times...\times R_{n})] \nonumber \\
&=&n_{1}n_{2}...n_{n} \sum_{S_{1}\times S_{2}\times...\times S_{n}}[\chi^{(\nu_{1}\otimes \nu_{2}\otimes...\otimes \nu_{n}) \ast }(S_{1}\times S_{2}\times...\times S_{n}) \nonumber \\
&&\chi^{(\Gamma_{1}\otimes \Gamma_{2}\otimes...\otimes \Gamma_{n})}(S_{1}\times S_{2}\times...\times S_{n})].
\end{eqnarray}
Thus
\begin{eqnarray}
&&\sum_{S_{1}\times S_{2}\times...\times S_{n}}[\chi^{(\nu_{1}\otimes \nu_{2}\otimes...\otimes \nu_{n}) \ast }(S_{1}\times S_{2}\times...\times S_{n}) \nonumber \\
&&\chi^{(\Gamma_{1}\otimes \Gamma_{2}\otimes...\otimes \Gamma_{n})}(S_{1}\times S_{2}\times...\times S_{n})] \nonumber \\
&=&\frac{g_{1}g_{2}...g_{n}}{n_{1}n_{2}...n_{n}}\delta_{\nu_{1}\nu_{2}...\nu_{n}, \Gamma_{1}\Gamma_{2}...\Gamma_{n}}.
\end{eqnarray}
Therefore:
\begin{equation}
a_{\nu}=\frac{n_{1}n_{2}...n_{n}}{g_{1}g_{2}...g_{n}}\delta_{\nu_{1}\Gamma_{1}}\delta_{\nu_{2}\Gamma_{2}}...\delta_{\nu_{n}
\Gamma_{n}}h_{1}h_{2}...h_{n},
\end{equation}
by using the orthogonality relations for the characters of irreducible representations. Thus $a_{\nu}=0$ or $a_{\nu}=1$
\end{proof}

This theorem has an immediate physical implications. It indicates in perturbation theory that if the symmetry group of perturbed Hamiltonian is a subgroup $H$ of the group $G$ related to unperturbed Hamiltonian, with $G=N\rtimes H$, the energy levels do not dived itself, since $a_{\nu}=0$ or $a_{\nu}=1$. The following theorem is useful for the determination of the electronic transitions.

\begin{theorem}
Suppose $G=G_{1}\times G_{2}\times...\times G_{n}=(N_{1}\times N_{2}\times...\times N_{n})\rtimes (H_{1}\times H_{2}\times...\times H_{n})$ a finite group, $\Gamma^{(\varrho)}$ a reducible representation of G which corresponds to an unique representations of $G/N=(G_{1}\times G_{2}\times...\times G_{n})/(N_{1}\times N_{2}\times...\times N_{n})$ and $\Gamma^{(\mu)}$ and $\Gamma^{(\nu)}$ irreducible representations of $G$ and $G/N$, respectively. Then $\Gamma^{(\rho)}$ corresponds to an unique reducible representation of $G/N$ and the number of times that $\Gamma^{(\rho)}\otimes \Gamma^{(\mu)}$ contains the representation $\Gamma^{(\nu)}$ is given by:

\begin{eqnarray}
a_{\nu}&=&\frac{1}{h}\sum_{S_{1}, S_{2},..., S_{n}}[\chi^{(\mu_{1})}(S_{1})\chi^{(\rho_{1})}(S_{1})\chi^{(\nu_{1})^{\ast}}(S_{1}) \chi^{(\mu_{2})}(S_{2})\chi^{(\rho_{2})}(S_{2})\chi^{(\nu_{2})^{\ast}}(S_{2})\nonumber \\
&&...\chi^{(\mu_{n})}(S_{n})\chi^{(\rho_{n})}(S_{n})\chi^{(\nu_{n})^{\ast}}(S_{n})], \nonumber
\end{eqnarray}
where $h=|H|=h_{1}h_{2}...h_{n}$, $S_{i}$ is an element of the subgroup $H_{i}$ and $\chi^{(\mu_{i})}, \chi^{(\rho_{i})}$,
$\chi^{(\nu_{i})}$ are characters associated to representations $\Gamma^{(\mu_{i})}, \Gamma^{(\rho_{i})}, \Gamma^{(\nu_{i})}$ , respectively, of the groups $G_{i}$ and $G_{i}/N_{i}$. In particular if $\Gamma^{(\rho)}$ is a representations that satisfies the conditions of the lemma 2, the number of times $a_{\nu}^{'}$ that  $\Gamma^{(\rho)}\otimes \Gamma^{(\mu)}$ contains the representation $\Gamma^{(\nu)}$ is given by $a_{\nu}^{'}=pa_{\nu}$.
\end{theorem}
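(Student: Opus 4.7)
The plan is to push the multiplicity computation onto the quotient group $H = H_1 \times H_2 \times \cdots \times H_n \simeq (G_1 \times G_2 \times \cdots \times G_n)/(N_1 \times N_2 \times \cdots \times N_n)$, where it reduces to the standard character-orthogonality computation on a direct product of groups.

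First I would use Lemma 1 together with the uniqueness hypothesis on $\Gamma^{(\varrho)}$ to view all three characters as class functions on $H$. By Lemma 1, the irreducible $\Gamma^{(\nu)}$ of $G/N$ corresponds to a unique irreducible representation of $G$, and the direct-product structure gives the factorization $\chi^{(\nu)} = \chi^{(\nu_1)}\otimes\cdots\otimes\chi^{(\nu_n)}$; similarly $\chi^{(\mu)} = \chi^{(\mu_1)}\otimes\cdots\otimes\chi^{(\mu_n)}$. The assumption that $\Gamma^{(\varrho)}$ corresponds uniquely to a representation of $G/N$ means that its character is constant on cosets of $N = N_1 \times \cdots \times N_n$, so it descends to a character on $H$ which in turn factorizes as $\chi^{(\rho_1)}\otimes\cdots\otimes\chi^{(\rho_n)}$. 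The first assertion of the theorem (that the descended representation is reducible and unique) is an immediate consequence of this factorization through the quotient map, and $\Gamma^{(\rho)}\otimes\Gamma^{(\mu)}$ may then be treated as a representation of $H$ whose character is the product of the two.

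The second step is the standard multiplicity formula on $H$,
\begin{equation}
a_\nu = \frac{1}{|H|}\sum_{S\in H}\chi^{(\rho)}(S)\,\chi^{(\mu)}(S)\,\chi^{(\nu)\ast}(S),
\end{equation}
followed by splitting the sum along $H = H_1 \times \cdots \times H_n$ and splitting each character into its product over the indices. This yields the stated expression with normalization $1/h = 1/(h_1 h_2\cdots h_n)$ and three character factors per index $i$.

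For the second assertion I would invoke Lemma 2: if $\Gamma^{(\rho)}$ satisfies its hypotheses with parameters $p_1,\ldots,p_n$, then as a representation of $G$ one has $\Gamma^{(\rho)} \cong p\,\Gamma^{(\rho_0)}$, where $\Gamma^{(\rho_0)}$ is the descended $G/N$-representation inflated back to $G$ and $p = p_1 p_2\cdots p_n$. Tensoring with $\Gamma^{(\mu)}$ preserves this scalar factor, so the multiplicity of $\Gamma^{(\nu)}$ in $\Gamma^{(\rho)}\otimes\Gamma^{(\mu)}$ is $p$ times its multiplicity in $\Gamma^{(\rho_0)}\otimes\Gamma^{(\mu)}$, yielding $a_\nu' = p\,a_\nu$. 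The main difficulty I anticipate is purely bookkeeping: keeping track of when each representation is being treated as a representation of $G$ versus of $H$, and using the correct normalization $1/|H|$ (rather than $1/|G|$) in the orthogonality sum. Once the three characters are coherently identified as characters of $H$, the remaining work is a direct computation using the direct-product factorization.
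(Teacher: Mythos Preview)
Your proposal is correct and follows essentially the same route as the paper: transfer all three representations to $H\simeq G/N$ via the semidirect-product isomorphism, apply the standard multiplicity formula there, factorize the sum along $H=H_1\times\cdots\times H_n$, and for the second claim insert the scalar factors $p_i$ coming from Lemma~2. The only minor difference is that the paper makes the reducibility of the descended $\Gamma^{(\rho)}$ explicit via a one-line contrapositive of Lemma~1 (if it were irreducible on $G/N$ it would lift to an irreducible on $G$, contradicting the hypothesis), whereas you fold this into ``immediate consequence of the factorization''; otherwise the arguments coincide.
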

\begin{proof}
Since $\Gamma^{(\rho)}$ is a representation of $(G_{1}\times G_{2}\times...\times G_{n})/(N_{1}\times N_{2}\times...\times N_{n})$ and G is a semidirect product of groups, then $\Gamma^{(\rho)}$ corresponds to an unique representation of $H_{1}\times H_{2}\times...H_{n}$, as $H_{1}\times H_{2}\times...H_{n}\simeq (G_{1}\times G_{2}\times...\times G_{n})/(N_{1}\times N_{2}\times...\times N_{n})$. Besides, $\Gamma^{(\rho)}$ corresponds to an unique reducible representation of the $G/N$, because if $\Gamma^{(\rho)}$ were irreducible representation, we would have a contradiction by lemma 1. Similarly, $\Gamma^{(\mu)}$ and $\Gamma^{(\nu)}$ are also irreducible representations of the $(G_{1}\times G_{2}\times...\times G_{n})/(N_{1}\times N_{2}\times...\times N_{n})$. Furthermore the tensor product of irreducible representations is also an irreducible representation. Thus
\begin{eqnarray}
a_{\nu}&=&\frac{1}{h_{1}h_{2}...h_{n}}\sum_{S_{1}\times S_{2}\times...\times S_{n}}[\chi^{(\mu_{1}\otimes \mu_{2}\otimes...\otimes \mu_{n})}(S_{1}\times S_{2}\times...\times S_{n}) \nonumber \\
&&\chi^{(\rho_{1}\otimes \rho_{2}\otimes...\otimes \rho_{n})}(S_{1}\times S_{2}\times...\times S_{n})
\chi^{(\nu_{1}\otimes \nu_{2}\otimes...\otimes \nu_{n})^{\ast}}(S_{1}\times S_{2}\times...\times S_{n})] \nonumber \\
&=&\frac{1}{h}\sum_{S_{1},S_{2},...,S_{n}}[\chi^{(\mu_{1})}(S_{1})\chi^{(\rho_{1})}(S_{1})\chi^{(\nu_{1})^{\ast}}(S_{1}) \chi^{(\mu_{2})}(S_{2})\chi^{(\rho_{2})}(S_{2})\chi^{(\nu_{2})^{\ast}}(S_{2})\nonumber \\
&&...\chi^{(\mu_{n})}(S_{n})\chi^{(\rho_{n})}(S_{n})\chi^{(\nu_{n})^{\ast}}(S_{n})]. \nonumber
\end{eqnarray}
If $\Gamma^{(\rho)}$ satisfies the conditions of the lemma 2,
\begin{eqnarray}
a_{\nu}^{'}&=&\frac{1}{h}\sum_{S_{1}, S_{2},..., S_{n}}[\chi^{(\mu_{1})}(S_{1})p_{1}\chi^{(\rho_{1})}(S_{1})\chi^{(\nu_{1})^{\ast}}(S_{1}) \chi^{(\mu_{2})}(S_{2})p_{2}\chi^{(\rho_{2})}(S_{2})\chi^{(\nu_{2})^{\ast}}(S_{2})\nonumber \\
&&...\chi^{(\mu_{n})}(S_{n})p_{n}\chi^{(\rho_{n})}(S_{n})\chi^{(\nu_{n})^{\ast}}(S_{n})] \nonumber \\
&=&p_{1}p_{2}...p_{n}a_{\nu}.
\end{eqnarray}
Hence the theorem is proved.
\end{proof}

This theorem can be useful for the selection rules. If $\theta^{(\rho)}_{k}, \psi^{(\mu)}_{i}$ and  $\phi^{(\nu)}_{j}$ are basis function for the representations $\Gamma^{(\rho)}, \Gamma^{(\mu)}$ and $\Gamma^{(\nu)}$, respectively, we can determine whether the products or linear combinations of these products belong to $\nu$-th representation through characters of the a subgroup $H_{1}\times H_{2}\times...\times H_{n}$ of $G$. If $a_{\nu}\neq 0$ transitions occurs between energy levels $\mu$ and $\nu$.
We now show how to obtain projection operators in this context according to the Van Vleck procedure with the basis function generating machine \cite{Bishop}.

\begin{theorem}
Let $\Gamma^{(\rho)}=\Gamma^{(\rho_{1})}\otimes \Gamma^{(\rho_{2})}\otimes...\otimes\Gamma^{(\rho_{n})}$ be an irreducible representation of $(G_{1}\times G_{2}\times...\times G_{n})/(N_{1}\times N_{2}\times...\times N_{n})$ of dimension $d_{\rho}$, $G=G_{1}\times G_{2}\times...\times G_{n}=(N_{1}\rtimes H_{1})\times (N_{2}\rtimes H_{2})\times... (N_{n}\rtimes H_{n})$ a finite group that has $m$ non-equivalent irreducible representations and $O_{R_{i}}$ an operator associated to element $R_{i}\in G_{i}$. Thus \\
(i) the action of the operator
\begin{equation}
P_{\lambda_{1}\lambda_{2}...\lambda_{n}k_{1}k_{2}...k_{n}}^{(\rho)}=\frac{d_{\rho_{1}}d_{\rho_{2}}...d_{\rho_{n}}}{h_{1}h_{2}
...h_{n}}\sum_{(R_{1}R_{2}...R_{n})\in H\leq G}\Gamma_{\lambda_{1}k_{1}}^{\ast (\rho_{1})}\Gamma_{\lambda_{2}k_{2}}^{\ast (\rho_{2})}...\Gamma_{\lambda_{n}k_{n}}^{\ast (\rho_{n})}O_{R_{1}}O_{R_{2}}...O_{R_{n}}
\end{equation}
on a basis-function of the representation space results in $0$ unless it belongs to the $k_{1}k_{2}...k_{n}$-th row of the irreducible representation $\Gamma^{(\rho)}$ which also corresponds to an unique irreducible representation of $H_{1}\times H_{2}\times...\times H_{n}$, where $|H_{i}|=h_{i}$.\\
(ii) the operator $P_{k_{1}k_{2}...k_{n}k_{1}k_{2}...k_{n}}^{(\rho)}$  projects only part of the function
\begin{equation}
\Phi=\sum_{\rho_{1}=1}^{m_{1}}\sum_{\rho_{2}=1}^{m_{2}}...\sum_{\rho_{n}=1}^{m_{n}}
\sum_{k_{1}=1}^{d_{\rho_{1}}}\sum_{k_{2}=1}^{d_{\rho_{2}}}...\sum_{k_{n}=1}^{d_{\rho_{n}}}
\phi_{k_{1}}^{(\rho_{1})}\phi_{k_{2}}^{(\rho_{2})}...
\\\phi_{k_{n}}^{(\rho_{n})}
\end{equation}
that belongs to the $k_{1}k_{2}...k_{n}$-th row of the irreducible representation $\Gamma^{(\rho)}$. \\
(iii) the operator
\begin{eqnarray}
P^{(\rho)}&=&\sum_{k_{1}}\sum_{k_{2}}...\sum_{k_{n}}P_{k_{1}k_{2}...k_{n}k_{1}k_{2}...k_{n}}^{(\rho)} \nonumber \\
&=&\frac{d_{\rho_{1}}d_{\rho_{2}}...d_{\rho_{n}}}{h_{1}h_{2}
...h_{n}}\sum_{R_{1}}\sum_{R_{2}}...\sum_{R_{n}}\chi^{\ast (\rho_{1})}(R_{1})\chi^{\ast (\rho_{2})}(R_{2})\nonumber \\
&&...\chi^{\ast (\rho_{n})}(R_{n})O_{R_{1}\times R_{2} \times...\times R_{n}} \nonumber
\end{eqnarray}
projects the arbitrary function $\Phi$ defined above in a function $\phi^{(\rho)}$ belonging to the  $\rho$-th irreducible representation of the groups $G_{1}\times G_{2}\times...\times G_{n}$ and $H_{1}\times H_{2}\times...\times H_{n}$

\end{theorem}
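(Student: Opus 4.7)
The plan is to reduce all three assertions to the classical Van Vleck projection operator result applied to the subgroup $H=H_{1}\times H_{2}\times \cdots \times H_{n}$, exploiting the correspondence from Lemma 1. Since $\Gamma^{(\rho)}$ is an irreducible representation of $(G_{1}\times \cdots \times G_{n})/(N_{1}\times \cdots \times N_{n})$, Lemma 1 and the first isomorphism theorem identify it with a unique irreducible representation of $H$, and the tensor product structure $\Gamma^{(\rho)}=\Gamma^{(\rho_{1})}\otimes \cdots \otimes \Gamma^{(\rho_{n})}$ factorizes its matrix elements as
\begin{equation}
\Gamma^{(\rho)}_{\lambda_{1}\cdots \lambda_{n},\,k_{1}\cdots k_{n}}(R_{1}\cdots R_{n})=\Gamma^{(\rho_{1})}_{\lambda_{1}k_{1}}(S_{1})\cdots \Gamma^{(\rho_{n})}_{\lambda_{n}k_{n}}(S_{n}), \nonumber
\end{equation}
where each $S_{i}\in H_{i}$ is the image of $R_{i}$ under the canonical projection $G_{i}\to G_{i}/N_{i}\simeq H_{i}$. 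With this factorization the operator defined in the statement is exactly the Van Vleck operator for $H$ acting on the representation space.

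For part (i), I would expand an arbitrary basis function as a sum of components $\phi^{(\sigma)}_{\ell_{1}\cdots \ell_{n}}$ transforming according to the various irreducible representations $\Gamma^{(\sigma)}$ of $H$, so that $O_{R_{1}}\cdots O_{R_{n}}\phi^{(\sigma)}_{\ell_{1}\cdots \ell_{n}}=\sum_{j_{1},\dots,j_{n}}\Gamma^{(\sigma_{1})}_{j_{1}\ell_{1}}(S_{1})\cdots \Gamma^{(\sigma_{n})}_{j_{n}\ell_{n}}(S_{n})\,\phi^{(\sigma)}_{j_{1}\cdots j_{n}}$. Substituting into the definition of $P^{(\rho)}_{\lambda_{1}\cdots\lambda_{n}k_{1}\cdots k_{n}}$ and applying the Great Orthogonality Theorem factor by factor,
\begin{equation}
\frac{d_{\rho_{i}}}{h_{i}}\sum_{S_{i}\in H_{i}}\Gamma^{\ast(\rho_{i})}_{\lambda_{i}k_{i}}(S_{i})\,\Gamma^{(\sigma_{i})}_{j_{i}\ell_{i}}(S_{i})=\delta_{\rho_{i}\sigma_{i}}\delta_{\lambda_{i}j_{i}}\delta_{k_{i}\ell_{i}}, \nonumber
\end{equation}
collapses the sum so that only the term with $\sigma=\rho$ and $\ell_{1}\cdots \ell_{n}=k_{1}\cdots k_{n}$ survives, proving (i).

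Part (ii) follows immediately by specializing $\lambda_{i}=k_{i}$ and applying (i) componentwise to the decomposition of $\Phi$: every contribution except the $(\rho;k_{1}\cdots k_{n})$ one is annihilated, and the surviving term is returned unchanged. Part (iii) then results from summing the diagonal projectors over $k_{1},\dots,k_{n}$, since $\sum_{k_{i}}\Gamma^{\ast(\rho_{i})}_{k_{i}k_{i}}(S_{i})=\chi^{\ast(\rho_{i})}(S_{i})$ converts the matrix-element sum into the character form displayed, and the resulting operator acts as the identity on the full $\rho$-isotypic component while annihilating all other isotypic components. The fact that the output lies in the $\rho$-th irreducible representation of $G$ as well follows once more from Lemma 1, which lifts the $H$-representation uniquely through the semidirect product structure.

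The main obstacle is purely bookkeeping: one must be careful that although the representation $\Gamma^{(\rho)}$ is indexed naturally by elements of $G/N$, the operators $O_{R_{1}}\cdots O_{R_{n}}$ in the formula are restricted to $H\leq G$, so the Great Orthogonality Theorem is legitimately applied on $H$ and not on $G$. This is precisely the point where Lemma 1 is indispensable, since it guarantees that the matrix elements $\Gamma^{(\rho_{i})}_{\lambda_{i}k_{i}}(S_{i})$ evaluated on $H_{i}$ still furnish a complete orthogonal system for the irreducible representations of $H_{i}$. Once this identification is made explicit, the remainder of the argument is the standard Van Vleck calculation replicated across the $n$ tensor factors.
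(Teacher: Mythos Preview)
Your proposal is correct and follows essentially the same approach as the paper: both invoke Lemma 1 to identify $\Gamma^{(\rho)}$ with a unique irreducible representation of $H=H_{1}\times\cdots\times H_{n}$, factorize the projection operator across the $n$ tensor components, and then apply the Great Orthogonality Theorem on each $H_{i}$ to produce the string of Kronecker deltas; parts (ii) and (iii) are obtained in both cases by specializing $\lambda_{i}=k_{i}$ and by summing the diagonals into characters. Your explicit remark that the sum runs over $H$ rather than $G$, and that Lemma 1 is what legitimizes applying orthogonality there, is exactly the point the paper makes when it says ``we can use the orthogonality theorem for the irreducible representations of the $H_{1}\times\cdots\times H_{n}$.''
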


\begin{proof}
For the item (i), we have that operator
\begin{equation}
P_{\lambda k}^{\rho}=\frac{d_{\rho}}{h}\sum_{R}\Gamma_{\lambda k}^{\ast(\rho)}(R)O_{R}
\end{equation}
can be writeen as
\begin{eqnarray}
P_{\lambda_{1}\lambda_{2}...\lambda_{n}k_{1}k_{2}...k_{n}}^{(\rho_{1}\otimes \rho_{2}\otimes...\rho_{n})}&=&\frac{d_{\rho_{1}}d_{\rho_{2}}...d_{\rho_{n}}}{h_{1}h_{2}...h_{n}}
\sum_{R_{1}\times R_{2}\times...\times R_{n}}[\Gamma_{\lambda_{1}\lambda_{2}...\lambda_{n}k_{1}k_{2}...k_{n}}^{\ast (\rho_{1}\otimes \rho_{2}\otimes...\rho_{n})}(R_{1}\times R_{2} \nonumber \\
&&\times...\times R_{n})O_{R_{1}\times R_{2}\times...\times R_{n}}] \nonumber \\
&=&\frac{d_{\rho_{1}}d_{\rho_{2}}...d_{\rho_{n}}}{h_{1}h_{2}...h_{n}}\sum_{R_{1}}\sum_{R_{2}}...\sum_{R_{n}}
[\Gamma_{\lambda_{1}k_{1}}^{\ast(\rho_{1})}(R_{1})\Gamma_{\lambda_{2}k_{2}}^{\ast(\rho_{2})}(R_{2}) \nonumber \\
&&...\Gamma_{\lambda_{n}k_{n}}^{\ast(\rho_{n})}(R_{n})O_{R_{1}R_{2}...R_{n}}]. \nonumber
\end{eqnarray}
By lemma 1 we have that $\Gamma^{(\rho)}$ corresponds to an unique irreducible representation of $G_{1}\times G_{2}\times...\times G_{n}$ and $H_{1}\times H_{2}\times...\times H_{n}$. Then we can use the orthogonality theorem for the irreducible representations of the $H_{1}\times H_{2}\times...\times H_{n}$, resulting in
\begin{eqnarray} \label{relation}
P_{\lambda_{1}\lambda_{2}...\lambda_{n}k_{1}k_{2}...k_{n}}^{(\rho_{1}\otimes \rho_{2}\otimes...\rho_{n})}
\varphi_{l_{1}}^{(i_{1})}\varphi_{l_{2}}^{(i_{2})}...\varphi_{l_{n}}^{(i_{n})}&=&
P_{\lambda_{1}}^{(\rho_{1})}\varphi_{l_{1}}^{(i_{1})}P_{\lambda_{2}}^{(\rho_{2})}\varphi_{l_{2}}^{(i_{2})}...
P_{\lambda_{n}}^{(\rho_{n})}\varphi_{l_{n}}^{(i_{n})} \nonumber \\
&=&\varphi_{l_{1}}^{(\rho_{1})}\delta_{i_{1}\rho_{1}}\delta_{k_{1}l_{1}}\varphi_{l_{2}}^{(\rho_{2})}
\delta_{i_{2}\rho_{2}}\delta_{k_{2}l_{2}}...\varphi_{l_{n}}^{(\rho_{n})}\delta_{i_{n}\rho_{n}}\delta_{k_{n}l_{n}} \nonumber \\
&=&\varphi_{l_{1}l_{2}...l_{n}}^{(i_{1}i_{2}...i_{n})}\delta_{i_{1}i_{2}...i_{n}j_{1}j_{2}...j_{n}}\delta_{k_{1}k_{2}...k_{n}
l_{1}l_{2}...l_{n}}.
\end{eqnarray}
For the item (ii), taking $\lambda_{1}=k_{1},\lambda_{2}=k_{2},...,\lambda_{n}=k_{n}$ in the eq. (\ref{relation}), we have
\begin{eqnarray}
P_{k_{1}k_{2}...k_{n}k_{1}k_{2}...k_{n}}^{(\rho_{1}\otimes \rho_{2}\otimes...\rho_{n})}\varphi_{l_{1}}^{(i_{1})}\varphi_{l_{2}}^{(i_{2})}...
\varphi_{l_{n}}^{(i_{n})}&=&P_{k_{1}}^{(\rho_{1})}\varphi_{l_{1}}^{(i_{1})}P_{k_{2}}^{(\rho_{2})}\varphi_{l_{2}}^{(i_{2})}...
P_{k_{n}}^{(\rho_{n})}\varphi_{l_{n}}^{(i_{n})} \nonumber \\
&=&\varphi_{k_{1}}^{(\rho_{1})}\varphi_{k_{2}}^{(\rho_{2})}...\varphi_{k_{n}}^{(\rho_{n})}
\delta_{i_{1}\rho_{1}}\delta_{i_{2}\rho_{2}}...\delta_{i_{n}\rho_{n}} \nonumber \\
&&\delta_{k_{1}l_{1}}\delta_{k_{2}l_{2}}...\delta_{k_{n}l_{n}}. \nonumber
\end{eqnarray}
Therefore
\begin{eqnarray}
&&P_{k_{1}k_{2}...k_{n}k_{1}k_{2}...k_{n}}^{(\rho_{1}\otimes \rho_{2}\otimes...\rho_{n})}
\sum_{\rho_{1}=1}^{m_{1}}\sum_{\rho_{2}=1}^{m_{2}}...\sum_{\rho_{n}=1}^{m_{n}} \sum_{k_{1}=1}^{d_{\rho_{1}}}\sum_{k_{2}=1}^{d_{\rho_{2}}}...\sum_{k_{n}=1}^{d_{\rho_{m}}}
\phi_{k_{1}}^{(\rho_{1})}\phi_{k_{2}}^{(\rho_{2})}...
\phi_{k_{n}}^{(\rho_{n})} \nonumber \\
&=&\phi_{k_{1}}^{(\rho_{1})}\phi_{k_{2}}^{(\rho_{2})}...\phi_{k_{n}}^{(\rho_{n})}. \nonumber
\end{eqnarray}
In the item (iii), we have
\begin{eqnarray}
P^{(\rho)}\Phi&=&\sum_{k_{1}}\sum_{k_{2}}...\sum_{k_{n}}P_{k_{1}k_{1}}^{(\rho_{1})}P_{k_{2}k_{2}}^{(\rho_{2})}...
P_{k_{n}k_{n}}^{(\rho_{n})}\Phi  \nonumber \\
&=&\frac{d_{\rho_{1}}d_{\rho_{2}}...d_{\rho_{n}}}{h_{1}h_{2}...h_{n}}\sum_{R_{1}\in H_{1}\leq G_{1}}
\sum_{R_{2}\in H_{2}\leq G_{2}}...\sum_{R_{n}\in H_{n}\leq G_{n}}
\sum_{k_{1}k_{2}...k_{n}}
[\Gamma_{k_{1}k_{1}}^{\ast (\rho_{1})}(R_{1}) \nonumber \\
&&\Gamma_{k_{2}k_{2}}^{\ast (\rho_{2})}(R_{2})...\Gamma_{k_{n}k_{n}}^{\ast (\rho_{n})}
(R_{n})\Phi] \nonumber \\
&=&\frac{d_{\rho_{1}}}{h_{1}}\sum_{R_{1}}\chi^{\ast (\rho_{1})}(R_{1})\sum_{\rho_{1}=1}^{m_{1}}\sum_{k_{1}=1}^{d_{\rho_{1}}}\phi_{k_{1}}^{(\rho_{1})}O_{R_{1}}
\frac{d_{\rho_{2}}}{h_{2}}\sum_{R_{2}}\chi^{\ast (\rho_{2})}(R_{2}) \nonumber \\
&&\sum_{\rho_{2}=1}^{m_{2}}\sum_{k_{2}=1}^{d_{\rho_{2}}}\phi_{k_{2}}^{(\rho_{2})}O_{R_{2}}...\frac{d_{\rho_{n}}}{h_{n}}\sum_{R_{n}}\chi^{\ast (\rho_{n})}(R_{n})\sum_{\rho_{n}=1}^{m_{n}}\sum_{k_{n}=1}^{d_{\rho_{n}}}\phi_{k_{n}}^{(\rho_{n})}O_{R_{n}} \nonumber \\
&=&\phi^{(\rho_{1})}\phi^{(\rho_{2})}...\phi^{(\rho_{n})},
\end{eqnarray}
as claimed
\end{proof}

It is interesting to note that the item (i) provides a recipe for to generate all the patterns of a given function related to an irreducible representation of $G_{1}\times G_{2}\times...\times G_{n}$ and $H_{1}\times H_{2}\times...\times H_{n}$, simultaneously, in according the Van Vleck scheme. The item (ii) shows how it is possible to obtain from an arbitrary function an element that belongs to the $k_{1}k_{2}...k_{n}$-th row of the irreducible representation.
In order to show how this approach can be extended we consider the next lemma

\begin{lemma}
Let $G=G_{1}\times G_{2}\times...\times G_{n}=(N_{1}\times N_{2}\times...\times N_{n})\rtimes (H_{1}\times H_{2}\times...\times H_{n})$ be a finite group. Then every subgroup of $H_{1}\times H_{2}\times...\times H_{n}$ is of the form $(G_{1}^{'}\times G_{2}^{'}\times...\times G_{n}^{'})/(N_{1}\times N_{2}\times...\times N_{n})$ with $N_{1}\times N_{2}\times...\times N_{n}\leq G_{1}^{'}\times G_{2}^{'}\times...\times G_{n}^{'} \leq G_{1}\times G_{2}\times...\times G_{n}$.
\end{lemma}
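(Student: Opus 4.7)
The plan is to reduce this lemma to the standard correspondence (fourth isomorphism) theorem for quotient groups, by first recognizing $H_1 \times \cdots \times H_n$ as a quotient of $G_1 \times \cdots \times G_n$. The semidirect product hypothesis gives $H_i \simeq G_i/N_i$, and the familiar product-of-quotients identity already used at the start of Lemma 1 yields
\[
H_1 \times H_2 \times \cdots \times H_n \;\simeq\; (G_1 \times G_2 \times \cdots \times G_n)\big/(N_1 \times N_2 \times \cdots \times N_n).
\]
Write $G := G_1 \times \cdots \times G_n$, $N := N_1 \times \cdots \times N_n$, and let $\pi : G \twoheadrightarrow G/N$ be the canonical projection.

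With this identification in hand, I apply the correspondence theorem to $\pi$. It supplies an inclusion-preserving bijection between subgroups of $G/N$ and subgroups of $G$ containing $N$, sending an intermediate subgroup $K$ with $N \leq K \leq G$ to $K/N = \pi(K)$, and sending $\overline{K} \leq G/N$ back to $\pi^{-1}(\overline{K})$. Composing this bijection with the isomorphism above, every subgroup of $H_1 \times \cdots \times H_n$ is realized as $K/N$ for a unique $K$ with $N \leq K \leq G$, which is the form claimed in the statement.

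The step I expect to be the main obstacle is not the group-theoretic content, which is essentially immediate from the correspondence theorem, but the interpretation of the notation $G_1' \times \cdots \times G_n'$. A generic intermediate subgroup $K$ between $N_1 \times \cdots \times N_n$ and $G_1 \times \cdots \times G_n$ need not split as a literal Cartesian product of subgroups $G_i' \leq G_i$; diagonal subgroups show that subgroups of a direct product can genuinely mix factors. In writing the proof I would therefore open by making the convention explicit, using $G_1' \times \cdots \times G_n'$ as a label for an arbitrary intermediate subgroup containing $N_1 \times \cdots \times N_n$. Once this reading is fixed, the lemma is exactly the correspondence theorem transported across the quotient identification from the first paragraph, and no further argument is required.
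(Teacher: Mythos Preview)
Your proposal is correct and follows essentially the same route as the paper: identify $H_1\times\cdots\times H_n$ with $(G_1\times\cdots\times G_n)/(N_1\times\cdots\times N_n)$ via the semidirect product hypothesis and the first isomorphism theorem, then invoke the correspondence theorem to describe all subgroups of the quotient. Your additional caveat about the notation $G_1'\times\cdots\times G_n'$ not necessarily denoting a literal Cartesian product is a valid observation that the paper does not address, but it does not change the underlying argument.
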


\begin{proof}
As $G_{1}\times G_{2}\times...\times G_{n}$ is a semidirect product of groups, we have:
\begin{equation}
H_{1}\times H_{2}\times...\times H_{n}\simeq \frac{G_{1}\times G_{2}\times...\times G_{n}}{N_{1}\times N_{2}\times...\times N_{n}}.
\end{equation}
By using the correspondence theorem and first isomorphism theorem \cite{Rose} we have that every subgroup of the $(G_{1}\times G_{2}\times...\times G_{n})/(N_{1}\times N_{2}\times...\times N_{n})$ is of the form $(G_{1}^{'}\times G_{2}^{'}\times...\times G_{n}^{'})/(N_{1}\times N_{2}\times...\times N_{n})$ where $N_{1}\times N_{2}\times...\times N_{n}\leq G_{1}^{'}\times G_{2}^{'}\times...\times G_{n}^{'} \leq G_{1}\times G_{2}\times...\times G_{n}$, as desirable.
\end{proof}

With this result the following theorem can be proved.

\begin{theorem}
Every irreducible representation of the a subgroup of the $H=H_{1}\times H_{2}\times...\times H_{n}$ with $G=(N_{1}\times N_{2}\times...\times N_{n})\rtimes (H_{1}\times H_{2}\times...\times H_{n})$ corresponds to an unique irreducible representation of the a group $G_{1}^{'}\times G_{2}^{'}\times...\times G_{n}^{'}$ with $N_{1}\times N_{2}\times...\times N_{n}\leq G_{1}^{'}\times G_{2}^{'}\times...\times G_{n}^{'} \leq G_{1}\times G_{2}\times...\times G_{n}$.
\end{theorem}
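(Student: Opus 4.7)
The plan is to reduce the claim to the same orthogonality-relation argument used in Lemma 1, but applied to the intermediate group $G' = G_{1}'\times G_{2}'\times\cdots\times G_{n}'$ in place of the full group $G$. By Lemma 3, every subgroup of $H$ has the form $G'/N$ with $N = N_{1}\times N_{2}\times\cdots\times N_{n}$ and $N\leq G'\leq G$. Since $N$ is normal in $G$ (being the kernel of the canonical projection $G\to H$ induced by the semidirect product structure), and $N\subseteq G'$, the subgroup $N$ is automatically normal in $G'$, so the quotient $G'/N$ is a well-defined group, isomorphic to the chosen subgroup of $H$.

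Next, I would produce the correspondence explicitly. Given any irreducible representation $\Gamma^{(\mu)}$ of $G'/N$, I define its lift $\tilde\Gamma^{(\mu)} := \Gamma^{(\mu)}\circ\pi$, where $\pi : G'\to G'/N$ is the canonical projection. This is a representation of $G'$ whose character is constant on the cosets of $N$, namely $\tilde\chi^{(\mu)}(R') = \chi^{(\mu)}(R'N)$ for $R'\in G'$. Uniqueness of the correspondence is immediate because $\pi$ is surjective, so $\Gamma^{(\mu)}$ is recovered from $\tilde\Gamma^{(\mu)}$ by evaluating on any coset representative.

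It then remains to check that the lift $\tilde\Gamma^{(\mu)}$ is itself irreducible as a representation of $G'$. Here I mirror the calculation inside the proof of Lemma 1. Each coset $R'N$ contains exactly $n_{1}n_{2}\cdots n_{n}$ elements, and $\tilde\chi^{(\mu)}$ is constant on each such coset, whence
\begin{eqnarray}
\sum_{R'\in G'}\tilde\chi^{(\mu)\ast}(R')\,\tilde\chi^{(\nu)}(R')
&=& n_{1}n_{2}\cdots n_{n}\sum_{S\in G'/N}\chi^{(\mu)\ast}(S)\chi^{(\nu)}(S) \nonumber \\
&=& n_{1}n_{2}\cdots n_{n}\cdot\frac{|G'|}{n_{1}n_{2}\cdots n_{n}}\,\delta_{\mu\nu} \;=\; |G'|\,\delta_{\mu\nu},\nonumber
\end{eqnarray}
which is precisely the orthogonality relation that characterises a complete set of irreducible characters of $G'$. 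Thus $\tilde\Gamma^{(\mu)}$ is an irreducible representation of $G'$, and distinct $\mu$'s give inequivalent irreducible representations. Combined with the factorisation $G' = G_{1}'\times\cdots\times G_{n}'$ and the fact that an irreducible representation of a direct product is a tensor product of irreducible representations of its factors (exactly as invoked in Lemma 1), this establishes the desired correspondence.

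The main obstacle, if any, is conceptual rather than computational: one has to recognise that although $G'$ need not itself be presented explicitly as a semidirect product of the $N_{i}'$s with some $H_{i}'$s, the normality of $N$ in $G'$ (inherited from its normality in $G$) is all that is required in order to repeat the orthogonality argument of Lemma 1 verbatim. Once this is noted, the proof reduces to bookkeeping with characters and cosets.
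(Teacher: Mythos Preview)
Your proof is correct and follows essentially the same route as the paper: invoke Lemma~3 to realise the given subgroup of $H$ as a quotient $G'/N$, and then apply the argument of Lemma~1 to lift irreducible representations of $G'/N$ to irreducible representations of $G'$. The paper compresses this into a single sentence citing Lemmas~1 and~3; you unpack the orthogonality calculation of Lemma~1 explicitly and, more carefully than the paper, point out that what is actually needed is only the normality of $N$ in $G'$ (inherited from $G$) rather than an a~priori semidirect-product decomposition of $G'$ itself.
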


\begin{proof}
As every irreducible representation of the $(G_{1}^{'}\times G_{2}^{'}\times...\times G_{n}^{'})/(N_{1}\times N_{2}\times...\times N_{n})$ corresponds to an unique irreducible representation of the $G_{1}^{'}\times G_{2}^{'}\times...\times G_{n}^{'}$ by using the lemma 1, the theorem follows of the lemma 3.
\end{proof}

This last theorem shown that the analysis developed previously can be continued for the subgroups of the subgroups of the semidirect product.

\section{Applications}
Functional roles played by structural symmetry in macromolecules has been investigated \cite{GoodSell}. Symmetry is essential for allosteric regulation according to model proposed by Monod \emph{et al} \cite{Monod}. The association between protomers in an oligomer may be such as to confer an element of symmetry on the molecule is the first assumption of the model. Another assumption is that the symmetry of each set of stereospecific receptors is the same as the symmetry of the molecule. Besides, when the protein goes from one state to another state, its molecular symmetry is conserved. It is shown that symmetrical oligomeric complexes with two or more identical subunits are formed by most of the soluble and membrane-bound proteins found in living cells and nearly all structural proteins are symmetrical polymers of hundred to millions of subunits \cite{GoodSell}. The glycolate oxidase enzyme is an example of such proteins. In this section we consider the dihedral group $D_{4}$ corresponding to symmetry of the glycolate oxidase enzyme. The group $G=D_{4}\times D_{4}$ is formed by elements

\begin{eqnarray}
G&=&\{E\times E, E\times C_{2}(x), C_{2}(x)\times E, C_{2}(x)\times C_{2}(x),  \nonumber \\
&&E\times C_{2}(y), C_{2}(y)\times E, C_{2}(y)\times C_{2}(y), E\times C_{2}(z) \nonumber \\
&&C_{2}(z)\times E, C_{2}(z)\times C_{2}(z), C_{2}(x)\times C_{2}(y), C_{2}(y)\times C_{2}(x) \nonumber \\
&&C_{2}(x)\times C_{2}(z), C_{2}(z)\times C_{2}(x), C_{2}(y)\times C_{2}(z), C_{2}(z)\times C_{2}(y)\}. \nonumber
\end{eqnarray}
Consider the following subgroups
\begin{eqnarray}
N_{1}\times N_{2} &=&\{E\times E,E\times C_{2}(x),C_{2}(x)\times E, C_{2}(x)\times C_{2}(x)\}, \nonumber \\
H_{1}\times H_{2} &=&\{E\times E,E\times C_{2}(y),C_{2}(y)\times E, C_{2}(y)\times C_{2}(y)\}. \nonumber
\end{eqnarray}
We have that
\begin{equation}
G=D_{4}\times D_{4}=(N_{1}\times N_{2})\rtimes (H_{1}\times H_{2})
\end{equation}
and therefore
\begin{equation}
\frac{D_{4}\times D_{4}}{N_{1}\times N_{2}} \simeq H_{1}\times H_{2}.
\end{equation}
The subgroup $H_{1}\times H_{2}$ has four non-equivalent irreducible representations $\Gamma^{(1)}, \Gamma^{(2)}, \Gamma^{(3)}, \Gamma^{(4)}$ with characters \\

\begin{tabular}{|c|c|c|c|c|}
  \hline
  & $E\times E$ & $E\times C_{2}(y)$ & $%
  C_{2}(y)\times E$ & $C_{2}(y)\times C_{2}(y)$ \\ \hline
  $\Gamma^{(1)}$ & $1$ & $1$ & $1$ & $1$ \\ \hline
  $\Gamma^{(2)}$ & $1$ & $-1$ & $1$ & $-1$ \\ \hline
  $\Gamma^{(3)}$ & $1$ & $1$ & $-1$ & $-1$ \\ \hline
  $\Gamma^{(4)}$ & $1$ & $-1$ & $-1$ & $1$ \\ \hline
\end{tabular}
\newline
\newline

We can note that the cosets
\begin{eqnarray}
C_{1}&=&\{E\times E,E\times C_{2}(x),C_{2}(x)\times E, C_{2}(x)\times C_{2}(x)\} \nonumber \\
C_{2}&=&\{E\times C_{2}(y), E\times C_{2}(z), C_{2}(x)\times C_{2}(y), C_{2}(x)\times C_{2}(z)\} \nonumber \\
C_{3}&=&\{C_{2}(y)\times E, C_{2}(y)\times C_{2}(x), C_{2}(z)\times E, C_{2}(z)\times C_{2}(x)\} \nonumber \\
C_{4}&=&\{C_{2}(y)\times C_{2}(y), C_{2}(y)\times C_{2}(z), C_{2}(z)\times C_{2}(y, C_{2}(z)\times C_{2}(z)\} \nonumber
\end{eqnarray}
are mapped by a homomorphism into elements $E\times E,E\times C_{2}(y),C_{2}(y)\times E, C_{2}(y)\times C_{2}(y)$, respectively. It is easy to verify that these irreducible representations also corresponds to an unique irreducible representations of the $D_{4}\times D_{4}$.

\section{Conclusions}
The concept of inner semidirect product can be explored in the context of the representations of finite groups and it is possible to derive results with a direct physical interpretation in quantum chemistry. In this approach, we obtained some general theorems using the orthogonality relations, the isomorphism theorems and the correspondence theorem. In the scenario of the perturbation theory, we show that if the symmetry group of the Hamiltonian is a subgroup of the unperturbed Hamiltonian, the energy levels do not dived itself. Conditions for transitions between energy levels were determined as well as projection operators and basis functions. Importantly, our results are general and apply to composite systems since we consider the direct product of groups and therefore the tensor product of representations. Finally, we present an example using the dihedral group corresponding to symmetry of the glycolate oxidase enzyme. As perspectives, this study can be conducted using the outer semidirect product groups and its implications in quantum chemistry may be analyzed.

\subsection*{\sffamily \large Acknowledgments}
The author would like to thank Eric Pinto for discussions.



\begin{thebibliography}{99}

\bibitem{Wigner1} E. P. Wigner, Ann. Math. $\textbf{40}$, 149 (1939).

\bibitem{Bargmann} V. Bargmann, Ann. Math. $\textbf{59}$, 1 (1954).

\bibitem{Inonu} E. Inönü, E. P. Wigner, N. Cimento $\textbf{9}$, 705 (1952).

\bibitem{Levy} J. M. Lévy-Leblond, Commun. Math. Phys. $\textbf{4}$, 776 (1963).

\bibitem{Gell1} M. Gell-Mann, Phys. Lett. $\textbf{8}$, 214 (1964).

\bibitem{Gell2} M. Gell-Mann, R. J. Oakes, B. Renner, Phys. Rev. $\textbf{175}$, 2195 (1968).

\bibitem{Bishop} D. M. Bishop, \textit{Group Theory and Chemistry} (Dover Publications, Inc., New York, 1993).

\bibitem{Cotton} F. A. Cotton, \textit{Chemical Applications of Group Theory} (Willey-Interscience, New York, 1971).

\bibitem{Tinkham} M. Tinkham, \textit{Group Theory and Quantum Mechanics} (McGraw-Hill, New York, 1964).

\bibitem{Weyl} H. Weyl, \textit{Theory of Groups and Quantum Mechanics} (Princeton University Press, Priceton, 1946).

\bibitem{Wigner2} E. P. Wigner, \textit{Group Theory and its Application the Quantum Mechanics of Atomic Spectra} (Academic Press, New York, 1959).

 \bibitem{Dres} M. S. Dresselhaus, G. Dresselhaus, A. Jorio, \textit{Group Theory: Application to the Physics of Condensed Matter} (Springer-Verlag, Heidelberg, 2008).

\bibitem{Knill} E. Knill,   \textquotedblleft Group Representations, Error Bases and Quantum Codes\textquotedblright, quant-ph/9608049 (1996).

\bibitem{Gottesman} D. Gottesman. Phys. Rev. A $\textbf{57}$, 1 (1998).

\bibitem{Lidar1} D. A. Lidar, D. Bacon, J. Kempe, K. B. Whaley, Phys. Rev. A $\textbf{63}$, 022306 (2001).

\bibitem{Lidar2} D. A. Lidar, D. Bacon, J. Kempe, K. B. Whaley, Phys. Rev. A $\textbf{63}$, 022307 (2001).

\bibitem{Ban} M. Ban, Int. J. Theor. Phys. $\textbf{42}$, 1 (2003).

\bibitem{Pinto} E. Pinto, M. A. S. Trindade, J. D. M. Vianna, Int. J. Quantum Inform. $\textbf{11}$, 7 (2013).

\bibitem{Kitaev1} A. Y. Kitaev, Ann. Phys. (N.Y.) $\textbf{303}$, 2 (2003).

\bibitem{Kitaev2} A. Y. Kitaev, Ann. Phys. (N.Y.) $\textbf{321}$, 2 (2006).

\bibitem{Yetter} D. N. Yetter, J. Knot Theory Ramif., $\textbf{01}$, 1 (1992).

\bibitem{Freed} D. S. Freed, F. Quinn, Commun. Math. Phys. $\textbf{156}$, 435 (1993).

\bibitem{Blokker} E. Blokker, S. Flodmark, Int. J. Quantum Chem. $\textbf{4}$, 463 (1971).

\bibitem{Eick} B. Eick, B. Souvignier, Int. J. Quantum Chem. $\textbf{106}$, 316 (2006).

\bibitem{Fritzsche} S. Fritzsche, Int. J. Quantum Chem. $\textbf{106}$, 1 (2006).

\bibitem{Peng} Daoling Peng, Jianyi Ma, Wenjian Liu, Int. J. Quantum Chem. $\textbf{109}$, 10 (2009).

\bibitem{Torres} E. M. Torres, J. Math. Chem. $\textbf{50}$, 7 (2012).

\bibitem{Lomont} J. S. Lomont, \textit{Applications of finite groups} (Dover Publications, New York, 1993).

\bibitem{Bradley} J. C. Bradley and F. J. Kammel, J. Math. Anal. Appl. $\textbf{8}$, 474 (1969).

\bibitem{Mac} G. W. Mackey, Am. J. Math.  $\textbf{73}$, 576 (1951).

\bibitem{Mc} H. V. McIntosh, J. Mol. Spectr. $\textbf{10}$, 51 (1963).

\bibitem{Chen} Jin-Quan Chen, Peng-Dong Fan, J. Math. Phys. $\textbf{39}$, 5502 (1998).

\bibitem{Geroch} R. Geroch, E. T. Newman, J. Math. Phys. $\textbf{12}$, 2 (1971).

\bibitem{Okubo} S. Okubo, J. Math. Phys. $\textbf{44}$, 3775 (2003).

\bibitem{Bacon} D. Bacon, A. W. Childs, W. V. Dam, Proc. 46th IEEE Symposium on Foundations of Computer Science, 469- 478 (2005). arXiv: quant-ph/0412033.

\bibitem{Inui} Y. Inui, F. Le Gall, Quantum Information and Computation $\textbf{7}$, 586 (2007).

\bibitem{Schur} I. Schur, Sitz. Preuss. Akad. $\textbf{164}$, (1906).

\bibitem{Seitz} F. Seitz, Ann. Math. $\textbf{37}$, 17 (1936).

\bibitem{GoodSell} D. S. Goodsell, A. J. Olson, Annu. Rev. Biophys. Biomol. Struct. $\textbf{29}$, 105-153 (2000).

\bibitem{Monod} J. Monod, J. Wyman, J. P. Changeaux, J. Mol. Biol. $\textbf{12}$, 88-118 (1965)

\bibitem{Rose} H. E. Rose, \textit{A course on Finite Groups} (Springer Verlag London Limited, 2009).




\end{thebibliography}
\end{document}